\newcommand\ket[1]{\ensuremath{|#1\rangle}}
\newcommand\bra[1]{\ensuremath{\langle#1|}}
\newtheorem{definition}{Definition}
\newtheorem{lemma}{Lemma}
\newtheorem{theorem}{Theorem}
\newtheorem{corollary}{Corollary}
\definecolor{mGreen}{rgb}{0,0.6,0}
\definecolor{mGray}{rgb}{0.5,0.5,0.5}
\definecolor{mPurple}{rgb}{0.58,0,0.82}
\definecolor{backgroundColour}{rgb}{0.95,0.95,0.92}
\lstdefinestyle{CStyle}{
    backgroundcolor=\color{backgroundColour},   
    commentstyle=\color{mGreen},
    keywordstyle=\color{magenta},
    numberstyle=\tiny\color{mGray},
    stringstyle=\color{mPurple},
    basicstyle=\footnotesize,
    breakatwhitespace=false,         
    breaklines=true,                 
    captionpos=b,                    
    keepspaces=true,                 
    numbers=left,                    
    numbersep=5pt,                  
    showspaces=false,                
    showstringspaces=false,
    showtabs=false,                  
    tabsize=2,
    language=C
}
\begin{document}

\ifdefined\lineno
	\linenumbers
\else
\fi


\title{Optimizing T gates in Clifford+T circuit as $\pi/4$ rotations around Paulis}

\author{Fang Zhang}%
\email{fangzh@umich.edu}
\author{Jianxin Chen}%
\email{chenjianxin@tsinghua.org.cn}
\affiliation{Alibaba Quantum Laboratory, Alibaba Group, Bellevue, WA 98004, USA}%

\begin{abstract}
In this work, we introduce a new circuit optimization technique to reduce the number of \verb+T+ gates in Clifford+T circuits by treating \verb+T+ gates conjugated by Clifford gates as $\frac{\pi}{4}$-rotations around Pauli operators. The tested benchmarks shows up to $71.43\%$ and an average $42.67\%$ reduction in T-count, both surpass the best performance reported. The worst case complexity of our algorithm is $O(nk^2)$ where $n$ is the number of qubits and $k$ is the number of \verb+T+ gates in the original Clifford+\verb+T+ circuit.
\end{abstract}

\date{\today}

\pacs{03.65.Ud}

\maketitle

\section{Introduction}

Much effort has been devoted to build the world's first meaningful quantum computer, which can deliver the ability to help scientists/engineers to develop new materials from drugs to battery, and to solve optimization problems from finance to logistics. In the past several years, prototypes of early-stage quantum computers are demonstrated by universities and tech companies. To understand the ability of these \emph{noisy intermediate-scale quantum} (NISQ) devices and the capability of future practical quantum computers, it is essential to develop efficient implementations of quantum algorithms, which are typically expressed in terms of quantum circuits. Quantum compilation, or more specifically, circuit optimization plays an important role in both regimes. 

There are many factors to take into consideration when executing quantum circuits on NISQ devices. Just like in classical computer science, efficiency measures the execution time necessary for an algorithm to finish its job.  Limited connectivity of existing quantum hardwares have a significant impact on the cost of quantum algorithms. This raise a related problem called \emph{qubit allocation} or \emph{quantum scheduling}\cite{siraichi2018qubit,li2018tackling}.  Furthermore, there may be significant variation in the error rate of qubits and links, or system reliability in general, on existing hardwares\cite{tannu2018case}. Such variation makes implementing a given circuit on NISQ hardwares to achieve better performance even more complicated. 

Among all the above-mentioned factors, execution time of a circuit is crucial since it will not only affect the performance but also decide if a circuit is realizable on hardware. Quantum states are very fragile due to decoherence which will limit the size of quantum circuits that can be executed on hardware. As a demonstration, various circuit optimization passes are implemented and tested in a recent paper\cite{CBS+18}, a measurable performance change of NISQ devices has been observed.

Building NISQ devices will not be our eventual goal. We expect to be able to protect quantum systems from various types of noises and scale up quantum devices using quantum error correction techniques. In the regime of fault tolerant quantum computing, one promising approach is to decompose logic operations on a surface code into Clifford+\verb+T+ circuits\cite{FMM+12}.  If we take the number of physical qubits involved (space cost) and protocol implementation time (time cost) into consideration, then \verb+T+ gate will be much more expensive than Clifford gates. For example, in a distance-$d$ surface code, the space-time cost of a logical \verb+T+ gate is of order $O(d^3)$ when employing Bravyi-Haah codes\cite{CTV17, FDJ13, OC17}. Although the cost of Clifford gates is of same order, the constant pre-factor is less by several orders of magnitude.

Various Clifford+\verb+T+ circuit optimization schemes have been proposed in literature\cite{gosset2013algorithm, amy2014polynomial, amy2016t, heyfron2018efficient}.  \cite{gosset2013algorithm} computes the \verb+T+-count exactly by meet-in-the-middle brute force search, which guarantees optimality, but does not go very far. Namely, it works for two qubits with $m=12$, or three qubits with $m=6$. It also introduces the notation of $R(\pm P)$, which is convenient for our approach.  $T$par \cite{amy2014polynomial} primarily optimizes \verb+T+-depth by resynthesizing the \verb+T+ gates in each \verb+CNOT++\verb+T+ section, but it also improves \verb+T+-count by cancelling pairs of \verb+T+ gates that are moved to the same place. It deals with Hadamard gate \verb+H+ by synthesizing all \verb+T+ gates that ``will not be computable'' before synthesizing the $H$, which is not quite optimal because some of those \verb+T+ gates may become ``computable'' again after the next \verb+H+ gate.  RM \cite{amy2016t} further optimizes \verb+CNOT++\verb+T+ sections by showing that optimizing \verb+T+-count in such circuits is equivalent to the minimum distance decoding problem of a certain Reed-Muller code. Although the decoding problem itself is hard, by using existing Reed-Muller decoders, RM is able to achieve an improved \verb+T+-count on many benchmark circuits. The handling of $H$ gates is still the same as $T$par.  TOpt \cite{heyfron2018efficient} uses the idea of gadgetization to deal with $H$ gates: Hadamard gates are eliminated from the first part of the circuit using an ancilla beginning in the $\ket{+}$ state, which is later measured and classically controls a Clifford gate (which may not be a Pauli, so the quantum controlled version may not be a Clifford). It also uses a new heuristic method on the resulting \verb+CNOT++\verb+T+ circuit to reduce \verb+T+-count in polynomial time.

In this paper, we adapt the $R(\pm P)$ notation used in \cite{gosset2013algorithm} to regard any Clifford+\verb+T+ circuit as a series of $\pi/4$ rotations, then cancel pairs of  \verb+T+ gates with the help of certain commutation relations. Compared to $T$par \cite{amy2014polynomial}, our method handles all Clifford gates in a unified way instead of having to consider \verb+H+ as a special case. This allows our method to achieve a strictly higher \verb+T+-count reduction.

The tested benchmarks shows up to $71.43\%$ and an average $42.67\%$ reduction in \verb+T+-count, both surpassing the best performance reported. It is worth mentioning that unlike other T-count optimizers which usually cause more than 100\% increase in \verb+CNOT+-count, our optimization procedure will not increase \verb+CNOT+-count while reducing \verb+T+-count. Furthermore, one may apply our optimization procedure to those optimized circuits generated from other circuit optimization tools no matter it optimizes \verb+CNOT+-count or \verb+T+-count, to get an even smaller \verb+T+-count without affecting other performance parameters (\verb+CNOT+-count). 

The detailed benchmarking result can be found in Sec \ref{sec:benchmark}.


\section{Preliminaries}

In this section we establish some necessary notations needed for presenting our work. Throughout this paper, we use \verb+U+ to denote a unitary gate and use $U$ to denote the matrix corresponds to \verb+U+.

The Pauli operators on a single qubit are $I=\left(\begin{array}{cc} 1 & 0 \\ 0 & 1 \end{array}\right )$, $X=\left(\begin{array}{cc} 0 & 1 \\ 1 & 0 \end{array}\right )$, $Y=\left(\begin{array}{cc} 0 & -i \\ i & 0 \end{array}\right )$, $Z=\left(\begin{array}{cc} 1 & 0 \\ 0 & -1 \end{array}\right )$. The set of Pauli operators on $n$ qubits is defined as $\mathcal{P}_n=\{ \sigma_1\otimes \cdots \otimes \sigma_n| \sigma_i \in \{I, X, Y, Z\}\}$, and $\vert \mathcal{P}_n\vert=4^n$. We use subscripts to indicate qubits on which an operator acts. For example, in a $3$-qubit system, $X_1Z_2=X_1\times Z_2=(X \otimes I \otimes I)\times (I \otimes Z \otimes I)= X \otimes Z \otimes I$.

The single-qubit Clifford group is generated by the Hadamard gate \verb+H+ and phase gate \verb+S+ where $H=\frac{1}{\sqrt{2}}\left(\begin{array}{cc} 1 & 1 \\ 1 & -1 \end{array}\right )$, $S=\left(\begin{array}{cc} 1 & 0 \\ 0 & i \end{array}\right )$. Multi-qubit Clifford group is generated by these two gates along with the \verb+CNOT+ gate ($\ket{0}\bra{0}\otimes I+\ket{1}\bra{1}\otimes X$), acting on any, $1$ or $2$, of those $n$ qubits.

It is worth mentioning that the Clifford+\verb+T+ universal set attracts a lot of attention and has been widely studied in literature\cite{NRS2001, NRS+2006}. ``Universal'' here means any multi-qubit unitary gate can be implemented to any given accuracy by using a sequence of gates from Clifford+\verb+T+ set. The number of \verb+T+ gates appear in this sequence is defined as \verb+T+-count of that Clifford+\verb+T+ circuit representation.  Similarly, we can define \verb+CNOT+-count.

Some \verb+T+ gates on distinct qubits may be performed simultaneously; we will say these \verb+T+ gates are in the same \verb+T+-cycle. For a given Clifford+\verb+T+ representation, we may have different ways to group these \verb+T+ gates into \verb+T+-cycles. The minimum number of \verb+T+-cycles is defined as \verb+T+-depth of that representation.

A fundamental problem arise: Given a quantum algorithm represented as a unitary gate, what is the most hardware-efficient way to implement it?  As we explained in the introduction, much effort have been devoted to understand the hardware efficiency. The choice of cost function may depend on the architecture of the hardware.

Let's assume a unitary \verb+U+ has an exact representation over the Clifford+\verb+T+ gate set. Thus there exist Clifford operators $C_0‘, C_1’, \cdots, C_k‘$ such that
\begin{eqnarray}
U&=&C_0'(T^{\otimes m_1})C_1'(T^{\otimes m_2}) \cdots C_{k-1}' (T^{\otimes m_k}) C_k' \nonumber \\
&=& C_0' (T^{\otimes m_1}){C_0'}^{\dagger} (C_0'C_1') (T^{\otimes m_2}) (C_0'C_1')^{\dagger} \cdots (C_0'\cdots C_{k-1}')(T^{\otimes m_k})(C_0'\cdots C_{k-1}')^{\dagger} (C_0'\cdots C_{k-1}'C_k').
\end{eqnarray}

By substituting variables, there exist Clifford operators $C_0=C_0', C_1=C_0'C_1', \cdots, C_k=C_0'C_1'\cdots C_k'$ such that
\begin{eqnarray}
U&=& C_0 (T^{\otimes m_1}){C_0}^{\dagger} C_1 (T^{\otimes m_2}) (C_1)^{\dagger} \cdots (C_{k-1})(T^{\otimes m_k})(C_{k-1})^{\dagger} (C_k).
\end{eqnarray}

Each $C_i (T_1\otimes T_2\otimes \cdots T_{m_{i+1}}){C_i}^{\dagger}$ can be further written as $C_iT_1C_i^{\dagger} C_iT_2 C_i^{\dagger} \cdots C_iT_{m_{i+1}}C_i^{\dagger}$, a sequence of \verb+T+ gates acting on different qubits conjugated by Clifford operator ${\verb+C+}_i$.

\cprotect\section{\verb+T+ gates as $\pi/4$ rotations}

The \verb+T+ gate can be defined as a $\pi/4$ phase rotation around the single-qubit Pauli \verb+Z+:
\begin{equation}
\begin{split}
T=\ket{0}\bra{0}+e^{i\pi/4}\ket{1}\bra{1}=\frac{1+e^{i\pi/4}}{2}I+\frac{1-e^{i\pi/4}}{2}Z.
\end{split}
\end{equation}

The key insight is that, when the \verb+T+ gate acting on the $i$-th qubit is conjugated by any multi-qubit Clifford \verb+C+, it remains a $\pi/4$ phase rotation around a multi-qubit Pauli:
\begin{equation}
CT_iC^\dagger = \frac{1+e^{i\pi/4}}{2}I+\frac{1-e^{i\pi/4}}{2}CZ_iC^\dagger = R(CZ_iC^\dagger)
\end{equation}
where we borrow the following notation from \cite{gosset2013algorithm}:
\begin{equation}
R(P) = \frac{1+e^{i\pi/4}}{2}I+\frac{1-e^{i\pi/4}}{2}P,\qquad P\in\pm\mathcal{P}_n.
\end{equation}
By the definition of the Clifford group, $CZ_iC^\dagger\in\pm\mathcal{P}_n^*$, where $\mathcal{P}_n^* = \mathcal{P}_n\backslash\{I\}$. Also notice that
\begin{equation}
R(I) = I,\qquad R(-I) = e^{i\pi/4}I.
\end{equation}
While those are not $\pi/4$ rotations, they are both the identity up to a phase, and we choose to allow them because they may be convenient in some circuit transformations. 

Another way of understanding $R(P)$ is that the set $\{R(P)\}$ is invariant under commutation with any Clifford \verb+C+:
\begin{equation}
\label{eq:commutation_with_clifford}
CR(P) = CR(P)C^\dagger C = R(CPC^\dagger)C.
\end{equation}
By applying \eqref{eq:commutation_with_clifford} to shift all Clifford gates to the back, any Clifford+\verb+T+ circuit can be represented as a series of $\pi/4$ rotations followed by a Clifford (up to a global phase):
\begin{equation}
\label{eq:product_of_rotations}
U = e^{i\phi}\left(\prod_{j=1}^m{R(P_j)}\right)C_0
\end{equation}
where $P_j\in\pm\mathcal{P}_n^*$, $C_0\in\mathcal{C}_n$, and $m$ is the number of \verb+T+ gates in the circuit. This same form is also presented in \cite{gosset2013algorithm}, except that here we allow negative Paulis as $P_j$ (because they are sometimes convenient), and we do not assume that $m$ is the \emph{minimum} \verb+T+-count (because computing the minimum is likely to be a computationally hard problem).

\section{A simple algorithm for optimizing the $T$-count}
\label{sec:t_count}
Once we have written a $\text{Clifford}+T$ circuit in the form of \eqref{eq:product_of_rotations}, there are several transformations we can apply to simplify it. The most obvious ones are:
\begin{equation}
\label{eq:cancellation}
R(P)R(-P) = I,\qquad R(P)R(P) = R(P)^2 = \frac{1+i}{2}(1-iP)\in\mathcal{C}_n,
\end{equation}
i.e. two opposite $\pi/4$ rotations will cancel each other, and two identical $\pi/4$ rotations can be combined into a $\frac{\pi}{2}$ rotation, which is a Clifford (namely, it is $CS_1C^\dagger$, for any Clifford $C$ such that $CZ_1C^\dagger = P$) and can be shifted to the back with \eqref{eq:commutation_with_clifford}. Applying either transformation will reduce the \verb+T+-count of the circuit by 2.

Of course, it is rare for a $\text{Clifford}+T$ circuit to have two $\pi/4$ rotations around the same Pauli directly adjacent to each other. However, it is not rare for two adjacent $\pi/4$ rotations to commute with each other. For example, if a group of \verb+T+ gates can be executed in parallel, then their corresponding $\pi/4$ rotations all commute pairwise. In the $R(P)$ representation, it is easy to determine which rotations commute:
\begin{equation}
\label{eq:commutation}
[R(P), R(Q)] = 0 \iff [P, Q] = 0. 
\end{equation}
As it turns out, when we take \eqref{eq:commutation} into account, there are many more opportunities to apply \eqref{eq:cancellation}. Specifically, in the \verb+CNOT++\verb+T+ circuits studied in \cite{amy2014polynomial}, all the ``rotation axis'' Paulis are either $I$ or $Z$ on every single qubit; hence, as \cite{amy2014polynomial} shows, they all commute with each other.

By studying the $\text{Clifford}+T$ circuit as a whole, however, we can find more commuting $\pi/4$ rotations than \cite{amy2014polynomial}. For example, in the $\text{Mod}\;5_4$ circuit, there are several appearances of $(I\otimes H)CNOT(I\otimes H)$, which is really a $CZ$ gate in disguise. It is possible to rewrite the $CZ$ gate as a $CNOT+S$ circuit: 
\begin{equation}
\label{eq:cz_as_cnot_s_circuit}
CZ = (S\otimes S)CNOT(I\otimes S^\dagger)CNOT.
\end{equation}
If we apply this transformation before feeding the circuit to the algorithm of \cite{amy2014polynomial}, then it correctly recognizes that there is a $CNOT+T$ section of the circuit encompassing all the \verb+T+ gates, and optimizes the \verb+T+-count down to 8, as opposed to 16 when this transformation is not applied.

Below, we describe our algorithm that applies \eqref{eq:cancellation} and \eqref{eq:commutation} to reduce the \verb+T+-count. We note that our algorithm does not explicitly use any transformation like \eqref{eq:cz_as_cnot_s_circuit}; instead, because it first transforms the circuit into the form of \eqref{eq:product_of_rotations}, it treats all equivalent Clifford circuits in the same way, ensuring that the representation of the CZ circuit does not affect the result.

\begin{enumerate}
\item Let $U_0$ be an empty circuit, and $U_1$ be the input circuit transformed into the form of \eqref{eq:product_of_rotations}, with the phase factor $e^{i\phi}$ removed.
\begin{itemize}
\item Over the course of the algorithm, both $U_0$ and $U_1$ will be kept in the form of \eqref{eq:product_of_rotations}:
\begin{equation}
U_0 = \left(\prod_{j} R(P_{0, j})\right) C_0,\qquad U_1 = \left(\prod_{j} R(P_{1, j})\right) C_1.
\end{equation}
\item $U_0U_1$ should be a loop invariant.
\end{itemize}
\item While the $R(P)$ product part of $U_1$ is not empty:
\begin{enumerate}
\item Remove the leftmost factor $R(P)$ from $U_1$, and insert $R(C_0PC_0^\dagger)$ to the left of $C_0$ in $U_0$.
\item Scan through each other factor $R(Q)$ in $U_0$, stopping when any of the following is true:
\begin{itemize}
\item $Q = \pm C_0PC_0^\dagger$.
\item $[Q, C_0PC_0^\dagger] \ne 0$. (Equivalently, $\{Q, C_0PC_0^\dagger\} = 0$.)
\item Every other factor $R(Q)$ in $U_0$ is exhausted.
\end{itemize}
\item If we found some $R(Q)$ such that $Q=\pm C_0PC_0^\dagger$:
\begin{itemize}
\item If $Q = -C_0PC_0^\dagger$, simply remove $R(Q)$ and $R(C_0PC_0^\dagger)$ from $U_0$.
\item If $Q = +C_0PC_0^\dagger$, then remove $R(Q)$ and $R(C_0PC_0^\dagger)$ from $U_0$, and let $C_0\gets R(Q)^2C_0$.
\end{itemize}
\end{enumerate}
\item Let $C_0 \gets C_0C_1$, and $C_1 \gets I$.
\item Transform $U_0$ back into a $\text{Clifford}+T$ circuit, and return it as the optimized circuit.
\end{enumerate}

The bottleneck of this algorithm is step~2(b); in the worst case, for each $R(P)$ factor in the input circuit, we need to scan through every $R(Q)$ factor in $U_0$, checking equality and commutativity each time. Therefore the worst case complexity is $O(k^2n)$, where $n$ is the number of qubits and $k$ is the number of \verb+T+ gates in the original circuit.

\section{The structure of $\pi/4$ rotations as a DAG and application to $T$-depth optimization}

As we have seen, the commutation relation \eqref{eq:commutation} proves quite useful in manipulating $\text{Clifford}+T$ circuits. Although it does not decrease the \verb+T+-count on its own, it enables other rules like \eqref{eq:cancellation} to be applied. 
Furthermore, reordering \verb+T+ gates may be interesting for its own sake if we consider other goals of circuit optimization, such as minimizing the \verb+T+-depth.

Since \eqref{eq:commutation} just swaps adjacent $\pi/4$ rotations, applying \eqref{eq:commutation} repeatedly is equivalent to applying some permutation to all the $\pi/4$ rotations in a circuit. However, it may not be immediately obvious which permutations are permissible. Fortunately, there is a simple description for the set of permissible permutations if we regard the structure of $\pi/4$ rotations as a directed acyclic graph (DAG).

\begin{definition}
The \verb+T+-graph of a circuit $U$ in the form of
\begin{equation}
\label{eq:product_of_rotations_2}
U = e^{i\phi}\left(\prod_{j=1}^m{R(P_j)}\right)C_0
\end{equation}
is defined as $G_T(U) = (V, E)$, where $V = \{v_j\}$ for $j=1, \cdots, m$, and
\begin{equation}
E = \{(v_i, v_j) \mid [P_i, P_j] \ne 0 \wedge i < j\},
\end{equation}
i.e. $G_T(U)$ has a vertex for each $\pi/4$ rotation in $U$, and an edge for each pair of $\pi/4$ rotations whose Pauli anti-commute with each other, with the direction of the edge determined on the order in which the two rotations appear in the original circuit.
\end{definition}

The \verb+T+-graph $G_T(U)$ of a circuit is always a DAG, since the order $v_1, \cdots, v_m$ is a topological ordering of $G_T(U)$. In fact, we have the following stronger result:

\begin{theorem}
\label{thm:topological_ordering}
By applying only \eqref{eq:commutation}, a circuit in the form of \eqref{eq:product_of_rotations_2} can be transformed into
\begin{equation}
\label{eq:product_of_rotations_permuted}
e^{i\phi}\left(\prod_{j=1}^m{R(P_{p_j})}\right)C_0
\end{equation}
If and only if $v_{p_1}, \cdots, v_{p_m}$ is a topological ordering of $G_T(U)$.
\end{theorem}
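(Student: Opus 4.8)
The plan is to treat this as the standard fact that, for a word over a partially commutative alphabet, the rearrangements obtainable by repeatedly swapping adjacent commuting letters are exactly the linear extensions of the induced dependence order. Here the ``letters'' are the rotations $R(P_j)$, two letters ``commute'' precisely when the corresponding Paulis do (by \eqref{eq:commutation}), and the dependence order is recorded by $G_T(U)$. I would prove the two implications separately, each by a short induction.

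For the ``only if'' direction I would show that a single application of \eqref{eq:commutation} sends a topological ordering of $G_T(U)$ to another topological ordering, and then iterate. Indeed, one application swaps two adjacent factors $R(P_a), R(P_b)$ with $[P_a,P_b]=0$; by the definition of $G_T(U)$ there is then no edge between $v_a$ and $v_b$, and the swap alters the relative order of no other pair of vertices, so no directed edge gets reversed. Since the original ordering $v_1,\dots,v_m$ is topological (every edge points forward by construction), every ordering reachable from it via \eqref{eq:commutation} is topological, in particular $v_{p_1},\dots,v_{p_m}$.

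For the ``if'' direction I would induct on $m$. Given that $v_{p_1},\dots,v_{p_m}$ is a topological ordering, $v_{p_1}$ is a source of $G_T(U)$, so $[P_i,P_{p_1}]=0$ for every $i<p_1$; thus $R(P_{p_1})$ commutes with every factor to its left in \eqref{eq:product_of_rotations_2}, and $p_1-1$ applications of \eqref{eq:commutation} move it to the very front while leaving the rest of the product in place. What remains is a product of $m-1$ rotations whose $T$-graph is the subgraph of $G_T(U)$ induced on $V\setminus\{v_{p_1}\}$ (the surviving rotations keep their pairwise order, hence their anticommutation edges and orientations), and $v_{p_2},\dots,v_{p_m}$ restricts to a topological ordering of that subgraph; the induction hypothesis then reorders it into $R(P_{p_2})\cdots R(P_{p_m})$ using only \eqref{eq:commutation}, with no swap ever touching the leading $R(P_{p_1})$. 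The cases $m\le 1$ are vacuous.

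I do not anticipate a genuine obstacle: the content is entirely in \eqref{eq:commutation} together with the elementary fact that restricting a topological order to the complement of a source is again a topological order. The only thing to be careful about — and the step I would write out most explicitly — is that the edge set of $G_T(U)$ is fixed once and for all from the \emph{original} circuit, so that ``the swapped pair carries no edge'' is exactly the invariant needed in the ``only if'' direction; and, symmetrically, that deleting $v_{p_1}$ genuinely yields the $T$-graph of the shortened circuit rather than some larger graph, so that the induction hypothesis applies to it verbatim.
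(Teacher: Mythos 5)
Your proposal is correct and matches the paper's argument in substance: your ``only if'' direction rests on the same observation the paper uses, namely that a swap permitted by \eqref{eq:commutation} never involves an anticommuting (edge-carrying) pair, so the relative order of the endpoints of every edge of $G_T(U)$ is preserved. Your ``if'' direction differs only cosmetically --- you pull the source $v_{p_1}$ to the front and induct on $m$, whereas the paper performs a bubble sort toward the target order and checks that each swapped pair must commute --- but both hinge on exactly the same use of the topological-ordering hypothesis, so this is essentially the paper's proof.
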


\begin{proof}
First, suppose that repeated applications of \eqref{eq:commutation} can indeed transform \eqref{eq:product_of_rotations_2} into \eqref{eq:product_of_rotations_permuted}. Let $e = (v_{p_i}, v_{p_j})$ be any one edge of $G_T(U)$. By the definition of $G_T(U)$, in the original circuit  \eqref{eq:product_of_rotations_2}:
\begin{itemize}
\item $R(P_{p_i})$ appears before $R(P_{p_j})$;
\item $[P_{p_i}, P_{p_j}] \ne 0$.
\end{itemize}
The latter condition means that application of \eqref{eq:commutation} cannot change the relative order of $R(P_{p_i})$ and $R(P_{p_j})$, so in \eqref{eq:product_of_rotations_permuted}, $R(P_{p_i})$ still appears before $R(P_{p_j})$, i.e. $i < j$. Since this argument is valid for every edge $e = (v_{p_i}, v_{p_j})$ of $G_T(U)$, it follows that $v_{p_1}, \cdots, v_{p_m}$ is a topological ordering of $G_T(U)$.

Conversely, suppose that $v_{p_1}, \cdots, v_{p_m}$ is indeed a topological ordering of $G_T(U)$. Then \eqref{eq:product_of_rotations_2} can be transformed into \eqref{eq:product_of_rotations_permuted} by doing a bubble sort on the $\pi/4$ rotations until they match the order $R(P_{p_1}), \cdots, R(P_{p_m})$. At each time step, such a bubble sort will only swap two adjacent rotations, $R(P_{p_j})$ and $R(P_{p_i})$, if both of the following are true:
\begin{itemize}
\item $p_j < p_i$ (so in \eqref{eq:product_of_rotations_2} $R(P_{p_j})$ comes before $R(P_{p_i})$);
\item $i < j$ (so in \eqref{eq:product_of_rotations_permuted} $R(P_{p_i})$ comes before $R(P_{p_j})$).
\end{itemize}
The first condition guarantees that $(v_{p_i}, v_{p_j})$ is not an edge of $G_T(U)$, and the second condition, together with the assumption that $v_{p_1}, \cdots, v_{p_m}$ is a topological ordering, guarantees that $(v_{p_j}, v_{p_i})$ is not an edge of $G_T(U)$ either. Therefore $[P_{p_i}, P_{p_j}] = 0$, and \eqref{eq:commutation} can indeed be applied to make the swap.
\end{proof}

We note that the ``only if'' direction of Theorem~\ref{thm:topological_ordering} allows \emph{only} applications of \eqref{eq:commutation}. For example, Both $R(Z)R(-Z)R(X)R(-X)$ and $R(Z)R(X)R(-X)R(-Z)$ evaluates to $I$ by \eqref{eq:cancellation}, so they are indeed equivalent, but that is not covered by Theorem~\ref{thm:topological_ordering}.


We have alluded to the relation between the representation \eqref{eq:product_of_rotations} and \verb+T+-depth of a circuit when we mentioned that \verb+T+ gates that can be applied in parallel always translate to commuting $\pi/4$ rotations. Unfortunately, the strong converse is not always true --- a group of pairwise commuting $\pi/4$ rotations cannot always be translated into a group of \verb+T+ gates applied in parallel. Instead, we need an extra condition:

\begin{theorem}
\label{thm:layer}
A group of $\pi/4$ rotations, $\prod_{j=1}^m R(P_j)$, can be translated into a $\text{Clifford}+T$ circuit of \verb+T+-depth 1 (i.e. one where all the \verb+T+ gates can be applied in parallel) if both of the following are true:
\begin{itemize}
\item All $P_j$ commute with each other;
\item There does not exist a non-empty subset $S$ of $\{1, \cdots, m\}$ such that $\prod_{j\in S}P_j = \pm I$.
\end{itemize}
\end{theorem}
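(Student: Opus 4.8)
\emph{Plan.} The plan is to reduce the theorem to one structural fact about commuting Paulis and then read off the circuit. \textbf{The fact:} if $P_1,\dots,P_m\in\pm\mathcal{P}_n^*$ commute pairwise and no nonempty $S\subseteq\{1,\dots,m\}$ has $\prod_{j\in S}P_j=\pm I$, then there is a Clifford $C$ and $m$ distinct qubits $i_1,\dots,i_m$ with $CP_jC^\dagger = Z_{i_j}$ for all $j$. Granting this, repeated use of \eqref{eq:commutation_with_clifford} gives
\begin{equation}
\prod_{j=1}^m R(P_j) \;=\; C^\dagger\!\left(\prod_{j=1}^m R(CP_jC^\dagger)\right)\!C \;=\; C^\dagger\!\left(\prod_{j=1}^m R(Z_{i_j})\right)\!C \;=\; C^\dagger\!\left(\prod_{j=1}^m T_{i_j}\right)\!C ,
\end{equation}
and because the $T_{i_j}$ act on pairwise distinct qubits they form a single \verb+T+-layer, so the right-hand side is a Clifford+\verb+T+ circuit of \verb+T+-depth $1$. (If the diagonalization only produces $CP_jC^\dagger=\pm Z_{i_j}$ one can either absorb the sign, since $R(-Z_i)=e^{i\pi/4}T_iS_i^\dagger$ still costs one \verb+T+ gate, or first post-compose $C$ with a product of $X_{i_j}$'s, as conjugation by $X_i$ flips $Z_i$ and fixes $Z_{i'}$ for $i'\ne i$.)

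So the content is the diagonalization fact, which I would prove by induction on $m$ (with $n$ free). The case $m=1$ is standard: since $P_1\ne\pm I$, conjugate each tensor factor by a single-qubit Clifford to bring $P_1$ into a product of $I$'s and $Z$'s, then repeatedly apply $\mathrm{CNOT}_{i\to i'}$, which sends $Z_iZ_{i'}\mapsto Z_{i'}$ and fixes $Z_i$, to collapse the support to one qubit. For the inductive step, apply the $m=1$ case to get a Clifford $C_1$ with $C_1P_1C_1^\dagger=Z_1$ (relabel so this is qubit $1$). For $j\ge 2$, $C_1P_jC_1^\dagger$ commutes with $Z_1$, hence equals $Z_1^{\epsilon_j}\otimes Q_j$ with $Q_j$ a Pauli on qubits $2,\dots,n$ and $\epsilon_j\in\{0,1\}$. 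The $Q_j$ still commute pairwise, and they are independent in the required sense: if $\prod_{j\in S}Q_j=\pm I$ for some nonempty $S\subseteq\{2,\dots,m\}$, then conjugating back, $\prod_{j\in S}P_j=\pm C_1^\dagger Z_1^{\,t}C_1$ with $t=\sum_{j\in S}\epsilon_j$, which is $\pm I$ if $t$ is even (contradicting the hypothesis on $S$) and $\pm P_1$ if $t$ is odd (so $S\cup\{1\}$ contradicts the hypothesis). By induction there is a Clifford $C'$ on qubits $2,\dots,n$ with $C'Q_jC'^\dagger=Z_{i_j}$, the $i_j$ distinct in $\{2,\dots,n\}$; applying $I_1\otimes C'$ turns each $P_j$ ($j\ge2$) into $Z_1^{\epsilon_j}Z_{i_j}$ and keeps $P_1$ at $Z_1$. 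Finally apply $\mathrm{CNOT}_{1\to i_j}$ for each $j$ with $\epsilon_j=1$; these commute (common control), send $Z_1Z_{i_j}\mapsto Z_{i_j}$, fix every $Z_{i_{j'}}$ with $\epsilon_{j'}=0$, and fix $Z_1$, completing the diagonalization.

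A conceptually cleaner route is to work in the binary symplectic representation: up to phase, Hermitian Paulis correspond to vectors in $\mathbb{F}_2^{2n}$, commutation to vanishing of the symplectic form, and ``$\prod_{j\in S}P_j=\pm I$'' to $\mathbb{F}_2$-linear dependence --- using that a product of pairwise-commuting Hermitian Paulis which is a scalar must be $\pm I$, never $\pm iI$. The hypotheses then say the $P_j$ form an ordered basis of an $m$-dimensional isotropic subspace; since the Clifford group surjects onto $\mathrm{Sp}(2n,\mathbb{F}_2)$ and, by Witt's extension theorem, $\mathrm{Sp}(2n,\mathbb{F}_2)$ acts transitively on ordered bases of isotropic subspaces, some Clifford carries that basis to $(Z_1,\dots,Z_m)$, and one cleans up residual signs as above.

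The main obstacle is the independence bookkeeping --- equivalently, the translation ``no subset product is $\pm I$'' $\Leftrightarrow$ ``$\mathbb{F}_2$-linearly independent'' --- namely checking that peeling off $P_1$ preserves the hypothesis for the remaining operators, so that both ways a subset product of the $Q_j$ can be $\pm I$ trace back to a forbidden subset of the original $P_j$. Everything else is routine stabilizer-formalism manipulation. One could also record that the converse holds (these two conditions exactly characterize \verb+T+-depth-$1$ groups of $\pi/4$ rotations), but that is not needed for the statement as given.
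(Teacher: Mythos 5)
Your proposal is correct and follows essentially the same route as the paper: the paper's proof simply asserts that the two hypotheses are exactly the conditions under which a Clifford $C$ with $CP_jC^\dagger = Z_j$ exists, and then writes the circuit as $C^\dagger\bigl(T^{\otimes m}\otimes I^{\otimes(n-m)}\bigr)C$, which is your conjugated \verb+T+-layer. The only difference is that you supply an explicit proof (by induction, or via the symplectic/Witt argument) of the standard stabilizer-formalism fact that the paper takes as known, including the sign bookkeeping that the paper's statement $CP_jC^\dagger = Z_j$ glosses over.
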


\begin{proof}
Those two conditions are exactly the conditions under which there exists a Clifford $C\in\mathcal{C}_n$ such that $CP_jC^\dagger = Z_j$ holds for $j = 1, \cdots, m$. 
The circuit is then given by
\begin{equation}
C^\dagger\left(T^{\otimes m}\otimes I^{\otimes(n-m)}\right)C = C^\dagger\left(\prod_{j=1}^m R(Z_j)\right)C = \prod_{j=1}^m R(C^\dagger Z_j C) = \prod_{j=1}^m R(P_j).
\end{equation}
\end{proof}

Notice that translating a \verb+T+-depth 1 circuit back into $\pi/4$ rotations with \eqref{eq:commutation_with_clifford} will indeed result in a circuit in the form of \eqref{eq:product_of_rotations} satisfying the conditions of Theorem~\ref{thm:layer}. However, as written, the converse of Theorem~\ref{thm:layer} is not true, because a circuit that initially does not satisfy those conditions may be simplified or otherwise transformed to satisfy them.

As \cite{amy2014polynomial} points out, when the goal is to minimize \verb+T+-depth, additional ancilla qubits initialized in the $\ket{0}$ state may be useful. However, they are also tricky: when using ancilla qubits to implement a unitary, they must return to a constant state at the end of the circuit, or otherwise the output state may be entangled with the ancilla instead of being a pure state. 

A obvious way to satisfy this condition is to ensure that, when the circuit is transformed into the form of \eqref{eq:product_of_rotations}, all the ancilla qubits stay in the state $\ket{0}$ throughout the circuit. It turns out that this is more or less what \cite{amy2014polynomial} does, too.

\begin{lemma}
Let $P = P_0\otimes Q$, where $P \in \mathcal{P}_{n+t}$, $P_0 \in \mathcal{P}_n$, and $Q \in \mathcal{P}_t$. Then the following statements are equivalent:
\begin{enumerate}
\item $\forall\ket{\psi}\; \exists\ket{\varphi}\; R(P)(\ket{\psi}\otimes\ket{0}^{\otimes t}) = \ket{\varphi}\otimes\ket{0}^{\otimes t}$;
\item $\exists\ket{\psi}\; \exists\ket{\varphi}\; R(P)(\ket{\psi}\otimes\ket{0}^{\otimes t}) = \ket{\varphi}\otimes\ket{0}^{\otimes t}$;
\item $Q \in \{I, Z\}^{\otimes t}$;
\item $\forall\ket{\psi}\; R(P)(\ket{\psi}\otimes\ket{0}^{\otimes t}) = (R(P_0)\ket{\psi})\otimes\ket{0}^{\otimes t}$.
\end{enumerate}
\end{lemma}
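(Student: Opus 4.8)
The plan is to prove the cycle of implications $(3)\Rightarrow(4)\Rightarrow(1)\Rightarrow(2)\Rightarrow(3)$, with essentially all the content concentrated in the last one. Before starting I would record the one elementary fact that drives everything: writing $Q = \sigma_1\otimes\cdots\otimes\sigma_t$ with each $\sigma_i\in\{I,X,Y,Z\}$, if every $\sigma_i\in\{I,Z\}$ then $Q\ket{0}^{\otimes t} = \ket{0}^{\otimes t}$, whereas if some $\sigma_i\in\{X,Y\}$ then that tensor factor sends $\ket{0}$ to a nonzero multiple of $\ket{1}$, so $Q\ket{0}^{\otimes t}$ is orthogonal to $\ket{0}^{\otimes t}$; in either case $Q\ket{0}^{\otimes t}$ is a nonzero unit vector since $Q$ is unitary. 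I would also note that the two scalars appearing in $R(P) = \frac{1+e^{i\pi/4}}{2}I + \frac{1-e^{i\pi/4}}{2}P$ are both nonzero, the second one because $e^{i\pi/4}\ne 1$.

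For $(3)\Rightarrow(4)$ I would simply expand $R(P) = \frac{1+e^{i\pi/4}}{2}I + \frac{1-e^{i\pi/4}}{2}(P_0\otimes Q)$, apply it to $\ket{\psi}\otimes\ket{0}^{\otimes t}$, and use the hypothesis $Q\ket{0}^{\otimes t}=\ket{0}^{\otimes t}$ to factor $\ket{0}^{\otimes t}$ out of both terms; what is left acting on the first $n$ qubits is exactly $\left(\frac{1+e^{i\pi/4}}{2}I + \frac{1-e^{i\pi/4}}{2}P_0\right)\ket{\psi} = R(P_0)\ket{\psi}$. The implications $(4)\Rightarrow(1)$ (take $\ket{\varphi}=R(P_0)\ket{\psi}$) and $(1)\Rightarrow(2)$ (instantiate the universally quantified $\ket{\psi}$ at any one unit vector) are immediate, so the only real work is $(2)\Rightarrow(3)$.

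For $(2)\Rightarrow(3)$, take $\ket{\psi},\ket{\varphi}$ as in the statement, with $\ket{\psi}$ a unit vector, and expand the left-hand side of $R(P)(\ket{\psi}\otimes\ket{0}^{\otimes t}) = \ket{\varphi}\otimes\ket{0}^{\otimes t}$ as $\frac{1+e^{i\pi/4}}{2}\,\ket{\psi}\otimes\ket{0}^{\otimes t} + \frac{1-e^{i\pi/4}}{2}\,(P_0\ket{\psi})\otimes(Q\ket{0}^{\otimes t})$. Now apply to both sides the projector $I_n\otimes\bigl(I_t - \ket{0}^{\otimes t}\bra{0}^{\otimes t}\bigr)$ onto the orthogonal complement of $\ket{0}^{\otimes t}$ in the last $t$ qubits: the right-hand side and the first left-hand term are killed outright. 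Arguing by contradiction, if $Q\notin\{I,Z\}^{\otimes t}$ then $Q\ket{0}^{\otimes t}\perp\ket{0}^{\otimes t}$, so the projector fixes the remaining term and we are left with $\frac{1-e^{i\pi/4}}{2}(P_0\ket{\psi})\otimes(Q\ket{0}^{\otimes t}) = 0$; but all three factors are nonzero ($\frac{1-e^{i\pi/4}}{2}\ne 0$; $P_0\ket{\psi}\ne 0$ since $P_0$ is unitary and $\ket{\psi}\ne 0$; $Q\ket{0}^{\otimes t}\ne 0$ likewise), a contradiction. Hence $Q\in\{I,Z\}^{\otimes t}$, which is $(3)$.

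I do not expect a serious obstacle. The only points that need care are: (i) stating and using the dichotomy that $Q\ket{0}^{\otimes t}$ either equals $\ket{0}^{\otimes t}$ or is orthogonal to it, which is what makes the projection argument clean; and (ii) remembering that $\ket{\psi}$ is taken to be a genuine (nonzero) state, without which $(2)$ would be vacuously true and the equivalence would fail. Everything else is bookkeeping with the fixed scalars $\frac{1\pm e^{i\pi/4}}{2}$.
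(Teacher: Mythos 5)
Your proof is correct and follows essentially the same route as the paper: the same cycle of implications $(1)\Rightarrow(2)\Rightarrow(3)\Rightarrow(4)\Rightarrow(1)$, driven by the same expansion of $R(P)(\ket{\psi}\otimes\ket{0}^{\otimes t})$ into its $I$ and $P_0\otimes Q$ parts. The only difference is that you make explicit (via the projector onto the complement of $\ket{0}^{\otimes t}$ and the dichotomy that $Q\ket{0}^{\otimes t}$ is either $\ket{0}^{\otimes t}$ or orthogonal to it) a step the paper states as an unargued equivalence, which is a welcome but not essentially different elaboration.
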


\begin{proof}
$(1) \Rightarrow (2)$: Obvious.
\medskip

$(2) \Rightarrow (3)$: The left hand side can be expanded as follows:
\begin{equation}
\begin{split}
R(P)(\ket{\psi}\otimes\ket{0}^{\otimes t}) &= \left(\frac{1+e^{i\pi/4}}{2}I+\frac{1-e^{i\pi/4}}{2}P\right)(\ket{\psi}\otimes\ket{0}^{\otimes t})\\
&= \frac{1+e^{i\pi/4}}{2}(\ket{\psi}\otimes\ket{0}^{\otimes t})+\frac{1-e^{i\pi/4}}{2}P(\ket{\psi}\otimes\ket{0}^{\otimes t})\\
&= \frac{1+e^{i\pi/4}}{2}(\ket{\psi}\otimes\ket{0}^{\otimes t})+\frac{1-e^{i\pi/4}}{2}(P_0\ket{\psi}\otimes Q\ket{0}^{\otimes t}).
\end{split}
\end{equation}
The last line should be equal to $\ket{\varphi}\otimes\ket{0}^{\otimes t}$. Equivalently, both of the following must hold:
\begin{align}
\label{eq:ancilla_invariant}
Q\ket{0}^{\otimes t} &= \ket{0}^{\otimes t},\\
\label{eq:ancilla_equivalence}
\frac{1+e^{i\pi/4}}{2}\ket{\psi}+\frac{1-e^{i\pi/4}}{2}P_0\ket{\psi} &= \ket{\varphi}.
\end{align}
The desired statement, $Q \in \{I, Z\}^{\otimes t}$, follows from \eqref{eq:ancilla_invariant}.
\medskip

$(3) \Rightarrow (4)$: Notice that, in \eqref{eq:ancilla_equivalence}, the left hand side evaluates to $R(P_0)\ket{\psi}$. The rest easily follows.
\medskip

$(4) \Rightarrow (1)$: Obvious.
\end{proof}

By repeated application of ``$(3) \Rightarrow (4)$'', we immediately get an easy way to incorporate ancilla qubits into our circuits:

\begin{corollary}
\label{cor:ancilla}
A circuit in the form of \eqref{eq:product_of_rotations} implementing the unitary $U$ can always be extended with \verb+T+ ancilla qubits as follows:
\begin{equation}
\label{eq:product_of_rotations}
U' = e^{i\phi}\left(\prod_{j=1}^m{R(P_j\otimes Q_j)}\right)(C_0\otimes I^{\otimes t})
\end{equation}
where $Q_j \in \{I, Z\}^{\otimes t}$. The extension satisfies $U'(\ket{\psi}\otimes\ket{0}^{\otimes t}) = (U\ket{\psi})\otimes\ket{0}^{\otimes t}$ for all $\ket{\psi}$.
\end{corollary}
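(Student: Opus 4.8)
The plan is to verify directly that $U'$ is an extension of $U$ by pushing the ancilla register through the circuit one gate at a time, acting on an arbitrary product state $\ket{\psi}\otimes\ket{0}^{\otimes t}$. The only tool needed is the implication $(3)\Rightarrow(4)$ of the preceding lemma: whenever $Q\in\{I,Z\}^{\otimes t}$, the rotation $R(P_0\otimes Q)$ sends $\ket{\chi}\otimes\ket{0}^{\otimes t}$ to $(R(P_0)\ket{\chi})\otimes\ket{0}^{\otimes t}$, so it leaves the ancilla register in the state $\ket{0}^{\otimes t}$ and acts on the data register exactly as $R(P_0)$ would.

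First I would apply $C_0\otimes I^{\otimes t}$ to $\ket{\psi}\otimes\ket{0}^{\otimes t}$, obtaining $(C_0\ket{\psi})\otimes\ket{0}^{\otimes t}$, so the ancilla register is still $\ket{0}^{\otimes t}$. Then I would process the rotations $R(P_j\otimes Q_j)$ in the order in which they act on the state (rightmost first), maintaining as a \emph{loop invariant} that after the first $\ell$ of these rotations the state has the form $\ket{\chi_{\ell}}\otimes\ket{0}^{\otimes t}$ for some data-register vector $\ket{\chi_{\ell}}$, with $\ket{\chi_0}=C_0\ket{\psi}$. Since each $Q_j$ lies in $\{I,Z\}^{\otimes t}$, the step $(3)\Rightarrow(4)$ of the lemma applies at every stage and gives $\ket{\chi_{\ell}} = R(P_{j(\ell)})\ket{\chi_{\ell-1}}$, where $j(\ell)$ is the index of the $\ell$-th rotation applied. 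A short induction on $\ell$ then shows that after all $m$ rotations the state equals $\bigl(\bigl(\prod_{j=1}^m R(P_j)\bigr)C_0\ket{\psi}\bigr)\otimes\ket{0}^{\otimes t}$; restoring the phase $e^{i\phi}$ yields $(U\ket{\psi})\otimes\ket{0}^{\otimes t}$, which is the claim.

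A couple of routine points get dispatched along the way. The rotations $R(P_j)$ may carry a negative sign, but this is harmless because $-(P_0\otimes Q)=(-P_0)\otimes Q$, so the sign can be absorbed into the data-register Pauli, leaving $Q_j$ a genuine element of $\{I,Z\}^{\otimes t}$ as the lemma requires; and the verification that $U'$ is of the advertised shape (a product of $R$'s followed by $C_0\otimes I^{\otimes t}\in\mathcal{C}_{n+t}$) is immediate from how it is written. I do not expect any real obstacle here: once the ordering convention for $\prod_{j=1}^m R(P_j)$ and the direction of the induction are pinned down, each individual step is a direct invocation of the preceding lemma, and the ``main difficulty'' is purely the bookkeeping of keeping the ancilla register in $\ket{0}^{\otimes t}$ through every intermediate step.
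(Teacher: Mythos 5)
Your proposal is correct and is essentially the paper's own argument: the paper proves the corollary precisely ``by repeated application of $(3) \Rightarrow (4)$'' of the preceding lemma, which is exactly the induction you carry out, with your remark about absorbing a sign via $-(P_0\otimes Q) = (-P_0)\otimes Q$ being a harmless and valid extra bookkeeping detail.
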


\begin{theorem}
\label{thm:ancilla_t_graph}
Adding ancilla qubits to a circuit $U$ using Corollary~\ref{cor:ancilla} does not change its \verb+T+-graph $G_T(U)$.
\end{theorem}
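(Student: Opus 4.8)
The plan is to compare the two T-graphs directly from the definition, using the representations supplied: the original circuit $U = e^{i\phi}\left(\prod_{j=1}^m R(P_j)\right)C_0$ and its ancilla extension $U' = e^{i\phi}\left(\prod_{j=1}^m R(P_j\otimes Q_j)\right)(C_0\otimes I^{\otimes t})$ from Corollary~\ref{cor:ancilla}, where $Q_j\in\{I,Z\}^{\otimes t}$. Both representations have exactly $m$ rotation factors, in the same order, so $G_T(U)$ and $G_T(U')$ have literally the same vertex set $\{v_1,\dots,v_m\}$. It therefore suffices to show the edge sets agree, i.e. that for every pair $i<j$ we have $[P_i,P_j]=0$ if and only if $[P_i\otimes Q_i,\,P_j\otimes Q_j]=0$.

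First I would record the elementary fact that any two Pauli operators either commute or anti-commute, and that for tensor products the commutation ``signs'' on the two factors multiply: writing $CA=\epsilon_1 AC$ and $DB=\epsilon_2 BD$ with $\epsilon_1,\epsilon_2\in\{\pm 1\}$, one gets $(C\otimes D)(A\otimes B)=\epsilon_1\epsilon_2\,(A\otimes B)(C\otimes D)$, so $[A\otimes B,\,C\otimes D]=0\iff\epsilon_1\epsilon_2=1$. Scalar signs in front of the Paulis (the $\pm$ allowed in $P_j\in\pm\mathcal{P}_n^*$, or any phase distinguishing $R(P)$ from $R(-P)$) are irrelevant here, since a scalar commutes with everything and the edges of the T-graph depend on the $P_j$ only through their pairwise commutation relations.

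The key observation is then that the ancilla factors all lie in $\{I,Z\}^{\otimes t}$ and hence commute pairwise, because $I$ and $Z$ commute on each ancilla qubit. Thus for every pair $i<j$ the ``ancilla sign'' $\epsilon_2$ in the identity above equals $+1$, and the criterion collapses to $[P_i\otimes Q_i,\,P_j\otimes Q_j]=0\iff[P_i,P_j]=0$. Applying this to all pairs $i<j$ shows $E(G_T(U'))=E(G_T(U))$, hence $G_T(U')=G_T(U)$.

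I do not anticipate a genuine obstacle: the argument reduces to a one-line computation with Pauli commutation. The only points deserving minor care are the sign bookkeeping just mentioned, and explicitly invoking the commute-or-anticommute dichotomy for Paulis (already implicit in the definition of the T-graph and in \eqref{eq:commutation}) so that the tensor-factor sign argument is rigorous.
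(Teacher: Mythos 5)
Your proposal is correct and follows essentially the same route as the paper's proof: both reduce the claim to showing $[P_i\otimes Q_i,\,P_j\otimes Q_j]=0\iff[P_i,P_j]=0$, which holds because the ancilla factors $Q_i,Q_j\in\{I,Z\}^{\otimes t}$ always commute. Your explicit sign-multiplication argument for tensor products of Paulis simply spells out a step the paper leaves implicit.
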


\begin{proof}
It suffices to show that $G_T(U)$ and $G_T(U')$ have the same set of edges, which is equivalent to showing that $[P_i\otimes Q_i, P_j\otimes Q_j] = 0$ if and only if $[P_i, P_j] = 0$. Since $Q_i, Q_j \in \{I, Z\}^{\otimes t}$, $[Q_i, Q_j] = 0$ always holds, so indeed we have $[P_i\otimes Q_i, P_j\otimes Q_j] = 0 \iff [P_i, P_j] = 0$.
\end{proof}

Now we revisit Theorem~\ref{thm:layer}. With the help of ancilla qubits, we can remove the second condition.

\begin{theorem}
\label{thm:layer_with_ancilla}
A group of $\pi/4$ rotations, $\prod_{j=1}^m R(P_j)$, can be translated into a $\text{Clifford}+T$ circuit of \verb+T+-depth 1, with $m$ ancilla qubits, as long as all $P_j$ commute with each other.
\end{theorem}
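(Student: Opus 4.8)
The plan is to reduce Theorem~\ref{thm:layer_with_ancilla} directly to Theorem~\ref{thm:layer} by appending $m$ ancilla qubits whose $Z$-operators are chosen so that the second hypothesis of Theorem~\ref{thm:layer} becomes automatic. First I would invoke Corollary~\ref{cor:ancilla} with $t = m$ and the particular choice $Q_j = Z_j$, meaning a Pauli $Z$ on the $j$-th ancilla qubit and identity on the other $m-1$ ancillas. Each such $Q_j$ lies in $\{I, Z\}^{\otimes m}$, so the corollary applies: the extended group $\prod_{j=1}^m R(P_j \otimes Z_j)$ acts on the original $n$-qubit register exactly as $\prod_{j=1}^m R(P_j)$ while returning the $m$ ancillas to $\ket{0}^{\otimes m}$.

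Next I would verify that the extended Paulis $\widetilde{P}_j := P_j \otimes Z_j$ satisfy both hypotheses of Theorem~\ref{thm:layer}. Pairwise commutativity is immediate: $[\widetilde{P}_i, \widetilde{P}_j] = 0$ because $[P_i, P_j] = 0$ by assumption and $[Z_i, Z_j] = 0$ always holds, which is precisely the observation already used in the proof of Theorem~\ref{thm:ancilla_t_graph}. For the second hypothesis, let $S \subseteq \{1, \dots, m\}$ be non-empty; then $\prod_{j \in S} \widetilde{P}_j = \bigl(\prod_{j \in S} P_j\bigr) \otimes \bigl(\prod_{j \in S} Z_j\bigr)$, and since the $Z_j$ for $j \in S$ act on disjoint ancilla qubits, their product is simply a $Z$ on exactly the qubits indexed by $S$ (with no spurious phase), which is a non-identity Pauli because $S \neq \emptyset$. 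Hence $\prod_{j \in S} \widetilde{P}_j \neq \pm I$ for every non-empty $S$, so no forbidden subset exists.

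Finally, applying Theorem~\ref{thm:layer} to $\prod_{j=1}^m R(\widetilde{P}_j)$ on the $(n+m)$-qubit system yields a Clifford $C \in \mathcal{C}_{n+m}$ with $C \widetilde{P}_j C^\dagger = Z_j$ and the explicit $T$-depth-$1$ circuit $C^\dagger\bigl(T^{\otimes m} \otimes I^{\otimes n}\bigr)C$; combining this with the ancilla extension from Corollary~\ref{cor:ancilla} gives the claimed $T$-depth-$1$ implementation using $m$ ancillas. I do not expect a genuine obstacle here beyond bookkeeping: the only point requiring a moment's care is the potential Pauli phase when multiplying the $P_j$ together, but because the appended blocks $Z_j$ sit on pairwise disjoint ancillas their product is phase-free, so the non-triviality of the ancilla component is never in doubt and the reduction to Theorem~\ref{thm:layer} goes through cleanly.
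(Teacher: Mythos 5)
Your proposal is correct and follows essentially the same route as the paper: invoke Corollary~\ref{cor:ancilla} with $Q_j = Z_j$ on $m$ ancillas, note that commutativity is preserved (as in the proof of Theorem~\ref{thm:ancilla_t_graph}) and that no non-empty subset product of the extended Paulis can be $\pm I$ because the ancilla factor is a non-trivial tensor of $Z$'s, then apply Theorem~\ref{thm:layer}. The only difference is that you spell out the subset-product check that the paper dismisses as ``naturally satisfied,'' which is a harmless (indeed welcome) elaboration.
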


\begin{proof}
We apply Corollary~\ref{cor:ancilla} to extend the group of rotations into $\prod_{j=1}^m R(P_j\otimes Q_j)$, where $Q_j = Z_j \in \{I, Z\}^{\otimes m}$. Now the second condition of Theorem~\ref{thm:layer} is naturally satisfied, and the first condition of Theorem~\ref{thm:layer} will continue to be satisfied due to the proof of Theorem~\ref{thm:ancilla_t_graph}. Hence an application of Theorem~\ref{thm:layer} on the extended group of rotations give the desired result.
\end{proof}

For some $\text{Clifford}+T$ circuits, the combination of the commutation relation \eqref{eq:commutation} and Theorem~\ref{thm:layer_with_ancilla} is a powerful tool for optimizing the \verb+T+-depth (with an unlimited number of ancilla qubits).

\begin{theorem}
\label{thm:t_depth}
Using only \eqref{eq:commutation} and Theorem~\ref{thm:layer_with_ancilla}, the minimum \verb+T+-depth that can be achieved for a circuit $U$ is exactly equal to the length of (i.e. the number of vertices on) the longest path in $G_T(U)$.
\end{theorem}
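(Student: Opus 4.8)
The plan is to read Theorem~\ref{thm:t_depth} as a layering statement about the directed acyclic graph $G_T(U)$ and to prove it by matching a lower bound against a Mirsky-style layering construction for the upper bound. First I would pin down what ``using only \eqref{eq:commutation} and Theorem~\ref{thm:layer_with_ancilla}'' actually produces. By Theorem~\ref{thm:topological_ordering}, repeated application of \eqref{eq:commutation} reorders the $\pi/4$ rotations into exactly the topological orderings of $G_T(U)$; and grouping a \emph{consecutive} block of rotations so that Theorem~\ref{thm:layer_with_ancilla} applies is legitimate precisely when that block's Paulis pairwise commute, i.e.\ when the block is an edge-free set (antichain) of $G_T(U)$. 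So the achievable $T$-depths are exactly the integers $D$ for which some topological ordering of $G_T(U)$ can be cut into $D$ consecutive antichains, and the theorem reduces to the claim that the least such $D$ equals $L$, the number of vertices on a longest directed path of $G_T(U)$.

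For the lower bound, fix a longest path $v_{i_1}\to v_{i_2}\to\cdots\to v_{i_L}$ and consider any admissible layering. If two path vertices $v_{i_a},v_{i_b}$ with $a<b$ lay in the same layer, then, since a topological ordering respects every edge, the vertices $v_{i_a},v_{i_{a+1}},\dots,v_{i_b}$ appear in that order in the reordered circuit, and, since each layer is a contiguous block, $v_{i_{a+1}}$ lies in the same layer as $v_{i_a}$; but $(v_{i_a},v_{i_{a+1}})$ is an edge, so $P_{i_a}$ and $P_{i_{a+1}}$ anti-commute, contradicting that the layer is an antichain. Hence the $L$ path vertices occupy $L$ distinct layers, so $D\ge L$.

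For the upper bound I would use the standard longest-path labelling. For each vertex $v$ let $\ell(v)$ be the number of vertices on a longest directed path of $G_T(U)$ ending at $v$; then $\ell(v)\in\{1,\dots,L\}$ and $\ell(v)\ge\ell(u)+1$ whenever $(u,v)\in E$. Put $S_k=\{v:\ell(v)=k\}$. Each $S_k$ is an antichain, since an edge with both endpoints in $S_k$ would force $\ell(v)>\ell(u)=\ell(v)$; and ordering the rotations so that $S_1$ comes first, then $S_2$, and so on (arbitrarily within each $S_k$) yields a topological ordering of $G_T(U)$ because $\ell$ strictly increases along every edge. By Theorem~\ref{thm:topological_ordering} this ordering is reachable by \eqref{eq:commutation}; cutting it into the consecutive blocks $S_1,\dots,S_L$ and applying Theorem~\ref{thm:layer_with_ancilla} to each pairwise-commuting block realizes $U$ at $T$-depth exactly $L$. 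Together with the lower bound this gives $\min D=L$.

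The one point that genuinely needs care, and the only obstacle beyond classical combinatorics, is the reduction in the first paragraph: one must make precise that no more elaborate interleaving of \eqref{eq:commutation} and Theorem~\ref{thm:layer_with_ancilla} (forming some layers, commuting further, forming more) can beat the ``reorder, then segment'' picture, and that the fresh ancilla qubits introduced by Theorem~\ref{thm:layer_with_ancilla} leave the underlying graph untouched --- the latter is exactly Theorem~\ref{thm:ancilla_t_graph}. Once the problem is recast as ``partition a topological ordering of $G_T(U)$ into the fewest consecutive antichains,'' the remaining argument is the Mirsky layering argument adapted to this setting.
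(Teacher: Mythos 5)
Your proposal is correct and follows essentially the same route as the paper: the lower bound forces the vertices of a longest path into distinct layers because commutation moves cannot reorder them and each layer must be pairwise commuting, and the upper bound is the same Mirsky-style layering by the length of the longest path ending at each vertex, converted to a topological ordering via Theorem~\ref{thm:topological_ordering} and to \verb+T+-layers via Theorem~\ref{thm:layer_with_ancilla}. Your explicit handling of the contiguity of layers in the lower bound and the appeal to Theorem~\ref{thm:ancilla_t_graph} for the ancillas are slightly more careful spellings-out of steps the paper leaves implicit, but the argument is the same.
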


\begin{proof}
First, we show that the minimum \verb+T+-depth cannot be less than the length of the longest path in $G_T(U)$. Suppose the longest path in $G_T(U)$ is $v_{j_1}, v_{j_2}, \cdots, v_{j_k}$. By the construction of $G_T(U)$, we have $\{P_{j_1}, P_{j_2}\} = \{P_{j_2}, P_{j_3}\} = \cdots = \{P_{j_{k-1}}, P_{j_k}\} = 0$. When using \eqref{eq:commutation} to swap the rotations, the relative order of $R(P_{j_1}), R(P_{j_2}), \cdots, R(P_{j_k})$ cannot change; then when using Theorem~\ref{thm:layer_with_ancilla} to group the rotations into layers, each of them must go into a different layer, since the rotations in the same layer must pairwise commute. Therefore the minimum \verb+T+-depth that can be achieved this way is $k$, the length of the longest path.

Next, we show that a \verb+T+-depth of $k$ is indeed achievable. To this end, we first sort the vertices of $G_T(U)$ in ascending order of the length of the longest path \emph{ending at} each vertex. Since the minimum length of such path is 1, and the maximum is $k$, this essentially divides the vertices of $G_T(U)$ into $k$ layers. Edges in $G_T(U)$ can only go from a layer to a \emph{later} layer, not to any prior layer nor to the same layer. Therefore, this sort is a topological ordering, and by Theorem~\ref{thm:topological_ordering}, we can reorder the $\pi/4$ rotations in $U$ accordingly. The fact that there does not exist any edge between vertices in the same layer also means that all rotations corresponding to those vertices commute with each other, so by applying Theorem~\ref{thm:layer_with_ancilla}, we can transform each layer into a \verb+T+-layer. The end result is a $\text{Clifford}+T$ circuit with \verb+T+-depth $k$.
\end{proof}

Again, Theorem~\ref{thm:t_depth} restricts the rules of transformation allowed. In particular, cancellation with \eqref{eq:cancellation} is not taken into account. In our experiments, we utilize cancellation by applying the algorithm described in \ref{sec:t_count} before applying Theorem~\ref{thm:t_depth}; this is also similar to the approach taken in \cite{amy2014polynomial}.

\section{Benchmarking}\label{sec:benchmark}

In this section, we will benchmark our T-Optimizer software with two leading Clifford+T optimizers, T-par and TOpt. The specifications of circuit inputs that T-par and TOpt take are slightly different. T-Optimizer takes a circuit in the .qc format that T-par accepted.

\begin{lstlisting}[basicstyle=\ttfamily, caption={An example : mod5\_4.qc}]
 .v b c d e a
 .i b c d e

 BEGIN
 X   a
 H   a
 Z   b   e   a
 Z   d   e   a
 H   a
 tof e   a
 H   a
 Z   c   d   a
 H   a
 tof d   a
 H   a
 Z   b   c   a
 H   a
 tof c   a
 tof b   a
 END
\end{lstlisting}

A description of the .qc format can be found at \href{https://github.com/aparent/QCViewer}{https://github.com/aparent/QCViewer}. For the sake of readability, we briefly explain the basics of .qc file, which indeed serves as textual representation of what the circuit consists of.  In the header, \verb+.v+ defines names of the circuit qubits, \verb+.i+ and \verb+.o+ specify which qubits accept primary inputs and report primary outputs respectively.  In the body,  \verb+H+ refers the Hadamard gate and \verb+tof+ refers to the Toffoli gate. The identifiers following the names refer to the qubits the gate is applies to. For instance, the line \verb+H a+ indicates a Hadamard gate on the qubit named \verb+a+. In the case where there is more than one qubit as with the Toffoli gate, the rightmost one is the target qubit and the rest are the control qubits. 

As mentioned in \cite{heyfron2018efficient}, \verb+T+-count does not account for the full space-time cost of quantum computation. Thus in our benchmarking result, we present both the \verb+CNOT+-count and the \verb+T+-count. The other two leading circuit optimization softwares, T-par and TOpt often produce the output circuit with increased \verb+CNOT+-count. We also aware that these two open source packages are under active development even after they reported their results in \cite{heyfron2018efficient, amy2016t, amy2014polynomial}. For some instances we tested, the output result is actually better than those reported number. Thus we choose to benchmark our T-Optimizer with these two open source packages rather than the reported results. TOpt has a heuristic subroutine, so we run each instance $100$ times, pick the best pair (\verb+CNOT+-count, \verb+T+-count) in the sense of reverse lexicographical order on the Cartesian product; if it is worse than the report number, then we will run $100$ more times.

\begin{figure}[h] 
\mbox{
\Qcircuit @C=1em @R=.7em {
& \gate{X} & \gate{H} 	& \gate{Z}	& \gate{Z}	& \gate{H}		&\targ	& \gate{H}	& \gate{Z}	& \gate{H}	& \targ	& \gate{H}	& \gate{Z}	& \gate{H}& \targ	& \targ 	&\qw	\\
& \qw	& \qw		& \ctrl{-1}	& \qw	& \qw		& \qw	& \qw	& \qw	& \qw	& \qw	& \qw	& \ctrl{-1}	& \qw	& \qw	& \ctrl{-1}	& \qw\\
& \qw 	& \qw		& \qw	& \qw	& \qw		& \qw	& \qw	& \ctrl{-2}	& \qw	& \qw	& \qw	& \ctrl{-1}	& \qw	& \ctrl{-2}	& \qw	& \qw\\
& \qw	& \qw		& \qw	& \ctrl{-3}	& \qw		& \qw	& \qw	& \ctrl{-1}	& \qw	& \ctrl{-3}	& \qw	& \qw	& \qw	& \qw	& \qw	& \qw\\
& \qw	& \qw		& \ctrl{-3}	& \ctrl{-1}	& \qw		& \ctrl{-4}	& \qw	& \qw	& \qw	& \qw 	& \qw	& \qw	& \qw	& \qw	& \qw	& \qw\\
}
} 
\caption{Quantum Circuit describe in mod5\_4.qc} 
\end{figure}
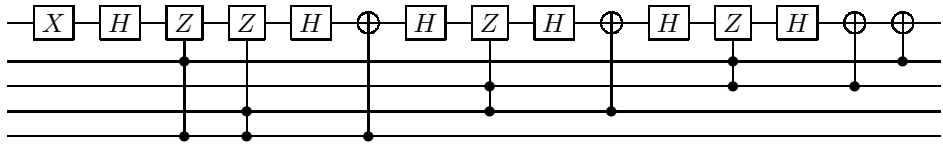

\begin{table}[t]
\scriptsize
\centering
\begin{tabular}{|l|c||r||r||r|}
\hline
Benchmark&(\verb+CNOT+-count, \verb+T+-count)&(\verb+CNOT+-count, \verb+T+-count)&(\verb+CNOT+-count, \verb+T+-count)&(\verb+CNOT+-count, \verb+T+-count) \\
&original&after T-par\cite{amy2014polynomial, amy2016t}&after TOpt\cite{heyfron2018efficient}& after T-Optimizer\\ \hline\hline
Mod 5$_{4}$ \cite{m11}					&(\textcolor{blue}{32},\textcolor{red}{28})		&(\textcolor{blue}{48},\textcolor{red}{16}) 					&(\textcolor{blue}{52},\textcolor{red}{18})	&(\textcolor{blue}{28},\textcolor{red}{8})		\\ \hline
VBE-Adder$_{3}$ \cite{vedral1996quantum}	&(\textcolor{blue}{80},\textcolor{red}{70})		&(\textcolor{blue}{114},\textcolor{red}{24})					&(\textcolor{blue}{106},\textcolor{red}{36})	&(\textcolor{blue}{70},\textcolor{red}{24})		\\ \hline
CSLA-MUX$_{3}$ \cite{van2005fast}			&(\textcolor{blue}{90},\textcolor{red}{70})		&(\textcolor{blue}{379},\textcolor{red}{62})					&(\textcolor{blue}{176},\textcolor{red}{64}) 	&(\textcolor{blue}{80},\textcolor{red}{62})		\\ \hline
CSUM-MUX$_{9}$ \cite{van2005fast}		&(\textcolor{blue}{196},\textcolor{red}{196})	&(\textcolor{blue}{366},\textcolor{red}{84}) 				&(\textcolor{blue}{984},\textcolor{red}{76})  	&(\textcolor{blue}{168},\textcolor{red}{84})		\\ \hline
QCLA-Com$_{7}$ \cite{DKR+06}			&(\textcolor{blue}{215},\textcolor{red}{203})	&(\textcolor{blue}{371},\textcolor{red}{94})					&(\textcolor{blue}{420},\textcolor{red}{135}) 	&(\textcolor{blue}{186},\textcolor{red}{95}) 	\\ \hline
QCLA-Mod$_{7}$ \cite{DKR+06}			&(\textcolor{blue}{441},\textcolor{red}{413})	&(\textcolor{blue}{813},\textcolor{red}{231})\footnote{\label{better}T-count better than reported.}			&(\textcolor{blue}{918},\textcolor{red}{305})  	&(\textcolor{blue}{382},\textcolor{red}{237})	\\ \hline
QCLA-Adder$_{10}$ \cite{DKR+06}			&(\textcolor{blue}{267},\textcolor{red}{238})	&(\textcolor{blue}{648},\textcolor{red}{162})				&(\textcolor{blue}{799},\textcolor{red}{181})\footref{better}  	&(\textcolor{blue}{233},\textcolor{red}{162})	\\ \hline
Adder$_{8}$ \cite{TTK10}					&(\textcolor{blue}{466},\textcolor{red}{399})	&(\textcolor{blue}{741},\textcolor{red}{213})				&(\textcolor{blue}{749},\textcolor{red}{280})\footref{better} 	&(\textcolor{blue}{409},\textcolor{red}{173})	\\ \hline
RC-Adder$_{6}$ \cite{CDK+04}				&(\textcolor{blue}{104},\textcolor{red}{77})		&(\textcolor{blue}{165},\textcolor{red}{47})					&(\textcolor{blue}{145},\textcolor{red}{59}) 	&(\textcolor{blue}{93},\textcolor{red}{47})		\\ \hline
Mod-Red$_{21}$ \cite{MS12}				&(\textcolor{blue}{122},\textcolor{red}{119})	&(\textcolor{blue}{223},\textcolor{red}{73}) 					&(\textcolor{blue}{168},\textcolor{red}{85}) 	&(\textcolor{blue}{105},\textcolor{red}{73})		\\ \hline
Mod-Mult$_{55}$ \cite{MS12}				&(\textcolor{blue}{55},\textcolor{red}{49})		&(\textcolor{blue}{104},\textcolor{red}{35}) 					&(\textcolor{blue}{104},\textcolor{red}{28}) 	&(\textcolor{blue}{48},\textcolor{red}{35})		\\ \hline
$\Lambda_3(X)$ -- \cite{BBC+95}			&(\textcolor{blue}{28},\textcolor{red}{28})		&(\textcolor{blue}{52},\textcolor{red}{16})					&(\textcolor{blue}{32},\textcolor{red}{22})	&(\textcolor{blue}{24},\textcolor{red}{16})		\\ \hline
\hspace{3.05em} -- \cite{NC04}				&(\textcolor{blue}{21},\textcolor{red}{21})		&(\textcolor{blue}{35},\textcolor{red}{15})  					&(\textcolor{blue}{26},\textcolor{red}{15}) 	&(\textcolor{blue}{18},\textcolor{red}{15})		\\ \hline
$\Lambda_4(X)$ -- \cite{BBC+95}			&(\textcolor{blue}{56},\textcolor{red}{56})		&(\textcolor{blue}{96},\textcolor{red}{28})					&(\textcolor{blue}{68},\textcolor{red}{38}) 	&(\textcolor{blue}{48},\textcolor{red}{28})		\\ \hline
\hspace{3.05em} -- \cite{NC04}				&(\textcolor{blue}{35},\textcolor{red}{35})		&(\textcolor{blue}{63},\textcolor{red}{23})					&(\textcolor{blue}{42},\textcolor{red}{23}) 	&(\textcolor{blue}{30},\textcolor{red}{23})		\\ \hline
$\Lambda_5(X)$ -- \cite{BBC+95}			&(\textcolor{blue}{84},\textcolor{red}{84})		&(\textcolor{blue}{134},\textcolor{red}{40})					&(\textcolor{blue}{104},\textcolor{red}{54}) 	&(\textcolor{blue}{72},\textcolor{red}{40})		\\ \hline
\hspace{3.05em} -- \cite{NC04}				&(\textcolor{blue}{49},\textcolor{red}{49})		&(\textcolor{blue}{97},\textcolor{red}{31})					&(\textcolor{blue}{60},\textcolor{red}{31}) 	&(\textcolor{blue}{42},\textcolor{red}{31})		\\ \hline
$\Lambda_{10}(X)$ -- \cite{BBC+95}			&(\textcolor{blue}{224},\textcolor{red}{224})	&(\textcolor{blue}{332},\textcolor{red}{100})				&(\textcolor{blue}{284},\textcolor{red}{134}) 	&(\textcolor{blue}{192},\textcolor{red}{100})	\\ \hline
\hspace{3.5em} -- \cite{NC04}				&(\textcolor{blue}{119},\textcolor{red}{119})	&(\textcolor{blue}{236},\textcolor{red}{71})					&(\textcolor{blue}{150},\textcolor{red}{71}) 	&(\textcolor{blue}{102},\textcolor{red}{71})		\\ \hline
GF($2^4$)-Mult \cite{MMC+09}				&(\textcolor{blue}{115},\textcolor{red}{112})	&(\textcolor{blue}{307},\textcolor{red}{68})					&(\textcolor{blue}{549},\textcolor{red}{56}) 	&(\textcolor{blue}{99},\textcolor{red}{68})		\\ \hline
GF($2^5$)-Mult \cite{MMC+09}				&(\textcolor{blue}{179},\textcolor{red}{175})	&(\textcolor{blue}{502},\textcolor{red}{115})				&(\textcolor{blue}{1250},\textcolor{red}{85})\footref{better} 	&(\textcolor{blue}{154},\textcolor{red}{115})	\\ \hline
GF($2^6$)-Mult \cite{MMC+09}				&(\textcolor{blue}{257},\textcolor{red}{252})	&(\textcolor{blue}{660},\textcolor{red}{150})				&(\textcolor{blue}{2449},\textcolor{red}{116})\footref{better}	&(\textcolor{blue}{221},\textcolor{red}{150})	\\ \hline
GF($2^7$)-Mult \cite{MMC+09}				&(\textcolor{blue}{349},\textcolor{red}{343})	&(\textcolor{blue}{996},\textcolor{red}{217})				&(\textcolor{blue}{2976},\textcolor{red}{179})\footref{better} 	&(\textcolor{blue}{300},\textcolor{red}{217})	\\ \hline
GF($2^8$)-Mult \cite{MMC+09}				&(\textcolor{blue}{469},\textcolor{red}{448})	&(\textcolor{blue}{1254},\textcolor{red}{264})				&(\textcolor{blue}{4441},\textcolor{red}{214})\footref{better} 	&(\textcolor{blue}{405},\textcolor{red}{264})	\\ \hline
GF($2^9$)-Mult \cite{MMC+09} 			&(\textcolor{blue}{575},\textcolor{red}{567})	&(\textcolor{blue}{1712},\textcolor{red}{351})				&(\textcolor{blue}{6552},\textcolor{red}{277})\footref{better} 	&(\textcolor{blue}{494},\textcolor{red}{351})	\\ \hline
GF($2^{10}$)-Mult \cite{MMC+09}			&(\textcolor{blue}{709},\textcolor{red}{700})	&(\textcolor{blue}{2206},\textcolor{red}{410})				&(\textcolor{blue}{7087},\textcolor{red}{351})\footref{better} 	&(\textcolor{blue}{609},\textcolor{red}{410})	\\ \hline
GF($2^{16}$)-Mult \cite{MMC+09}			&(\textcolor{blue}{1837},\textcolor{red}{1792})	&(\textcolor{blue}{6724},\textcolor{red}{1040})				&(\textcolor{blue}{*},\textcolor{red}{*})\footnote{\label{nogo}TOpt's TODD implementation cannot finish the optimization within 24 hours on a cluster node with Intel(R) Xeon(R) Platinum 8163 CPU @ 2.50GHz and 16GB of memory.}	 	&(\textcolor{blue}{1581},\textcolor{red}{1040})	\\ \hline
GF($2^{32}$)-Mult \cite{MMC+09}			&(\textcolor{blue}{7292},\textcolor{red}{7168})	&(\textcolor{blue}{33269},\textcolor{red}{4128})				&(\textcolor{blue}{*},\textcolor{red}{*})\footref{nogo} 	&(\textcolor{blue}{6268},\textcolor{red}{4128})	\\ \hline
GF($2^{64}$)-Mult \cite{MMC+09}			&(\textcolor{blue}{28860},\textcolor{red}{28672})&(\textcolor{blue}{180892},\textcolor{red}{16448})			&(\textcolor{blue}{*},\textcolor{red}{*})\footref{nogo} 	&(\textcolor{blue}{24765},\textcolor{red}{16448}) 	\\ \hline\hline
Average Reduction						&									&(\textcolor{blue}{+142.30\%},\textcolor{red}{-41.41\%})			&(\textcolor{blue}{+260.32\%},\textcolor{red}{-38.00\%}) 	&(\textcolor{blue}{0\%}\footnote{The difference between our CNOT-count and the original CNOT-count reported in Column 2 is caused by different decompositions of Toffoli gates. In our optimization procedure, we didn't further reduce CNOT-count though it looks we did from the numbers here. },\textcolor{red}{-42.67\%}) 	\\ \hline
Maximum T-count Reduction				&									&(\textcolor{blue}{+42.50\%}, \textcolor{red}{-65.71\%})			&(\textcolor{blue}{+402.04\%},\textcolor{red}{-61.22\%}) 	&(\textcolor{blue}{0\%}, \textcolor{red}{-71.43\%}) 	\\ \hline
 \end{tabular}
\caption{We report the CNOT-count and T-count after no optimization (original) and after T-par, TOpt and T-Optimizer optimizations with no ancillae.}
\label{tab:benchmarks2}
\end{table}

\section{Summary and Future Work}
In this work, we present a new optimization technique to reduce the number of \verb+T+ gates in Clifford+\verb+T+ circuits by treating every \verb+T+ gate conjugated by Clifford operators as $\frac{\pi}{4}$-rotations around Pauli operators. For benchmarking circuits like  $\textrm{Adder}_8$ and $\textrm{Mod}5_4$, T-Optimizer will reduce significantly more \verb+T+ gates than any other circuit optimization software. 

As we learn through the benchmarking result, all these benchmarked quantum circuit optimizers have ``sweet spots'' in which their performance has no equal. A unified framework of circuit optimization would be very interesting.  

Another avenue of work is to start from the original unitary gate \verb+U+ which may not be exactly represented by Clifford+\verb+T+. Many attempts have been made to address this fundamental open question.  An algorithm based on exhausted search was presented in \cite{F2011}. Given its exponential runtime, it is hard to practically use that for reasonable small accuracy $\epsilon$. For quantum algorithms based on phase estimations, phase kickback tricks will introduce ancillary qubits to perform phase gates\cite{KSV+02}. The Solovay-Kitaev algorithm and its variants are also well-studied in literature\cite{DN+05}. The algorithm runs in $O(\log^{2.71}(1/\epsilon))$ time and produce a quantum circuit of size $O(\log^{3.97}(1/\epsilon))$ to approximate the desired unitary gate up to accuracy $\epsilon$. In the past several years, several efficient algorithms with improved circuit size (compared with the Solovay-Kitaev algorithm) have been proposed for single-qubit unitary approximation\cite{KMM13, KMM+16, S12, RS14}. For instance, \cite{S12} presents an efficient algorithm to achieve \verb+T+-counts of $\sim 10+4\log_2(1/\epsilon)$, which matches the information-theoretic lower bound of $K+3\log_2(1/\epsilon)$. However, to efficiently approximate general multi-qubit unitary gates, there is still ample room for improvement in circuit size. 

\emph{Note added:} Simultaneously with our results,  Kissinger and Wetering demonstrated a method to reduce the number of \verb+T+-gates in a quantum circuit by presenting the quantum circuit as a ZX-diagram and then using the phase teleportation technique\cite{KW19}. Surprisingly, KW method produces benchmarking results identical to ours and both methods do not change number or locations of any non-phase gates.

\section*{Acknowledgements}
The authors would like to thank Mario Szegedy and Neil Julien Ross for helpful discussions. J. C. would like to thank Yunseong Nam for sharing benchmarking circuit files used in \cite{CMN+18}. Their benchmarking circuits may involve some rotations which may not be Clifford gates anymore,  therefore the benchmark result is not included here.  

\bibliographystyle{apsrev4-1}
\bibliography{Tdepth}

\appendix


\end{document}